\documentclass[pra,superscriptaddress,twocolumn,amssymb]{revtex4}

\usepackage[utf8]{inputenc}
\usepackage{times}

\usepackage{amsmath}
\usepackage{amssymb}
\usepackage{amsthm}
\usepackage{mathtools}
\usepackage{physics}
\usepackage{bm}
\usepackage{bbm}
\usepackage{braket}
\usepackage{accents}
\usepackage{upgreek}

\usepackage[pdftex]{graphicx}
\DeclareGraphicsExtensions{.png,.pdf,.eps,.jpg}
\graphicspath{{./figs/}}
\usepackage{subfigure}
\usepackage{array}
\usepackage{multirow}
\newcolumntype{M}[1]{>{\centering\arraybackslash}m{#1}}
\newcolumntype{N}{@{}m{0pt}@{}}

\usepackage{tikz}
\usetikzlibrary{arrows.meta, positioning, decorations.pathreplacing,
	decorations.pathmorphing, calc, shapes.geometric}
\tikzstyle{box}   = [draw, fill=blue!10, text centered, rectangle,
minimum width=2cm, minimum height=1cm]
\tikzstyle{arrow} = [thick, ->, >=stealth]

\usepackage{xcolor}
\usepackage[normalem]{ulem}

\usepackage{tcolorbox}
\tcbuselibrary{breakable}
\usepackage{placeins}
\usepackage{dcolumn}

\usepackage{hyperref}
\definecolor{myurlcolor}{rgb}{0,0,0.7}
\definecolor{myrefcolor}{rgb}{0.8,0,0}
\hypersetup{
	unicode       = true,
	pdfusetitle,
	bookmarks     = false,
	breaklinks    = false,
	pdfborder     = {0 0 0},
	backref       = false,
	colorlinks    = true,
	linkcolor     = myrefcolor,
	citecolor     = myurlcolor,
	urlcolor      = myurlcolor
}

\newtheorem{theorem}{Theorem}

\newtheorem{lemma}{Lemma}

\newtheorem{definition}{Definition}

\ifx\proof\undefined
\newenvironment{proof}[1][\proofname]{\par
	\normalfont\topsep6\p@\@plus6\p@\relax
	\trivlist
	\itemindent\parindent
	\item[\hskip\labelsep\scshape #1]\ignorespaces
}{%
	\endtrivlist\@endpefalse
}
\providecommand{\proofname}{Proof}
\fi

\newcommand{\eqnref}[1]{Eq.~(\ref{#1})}

\newcommand{\secref}[1]{Sec.~\ref{#1}}
\newcommand{\appref}[1]{App.~\ref{#1}}

\renewcommand\Tr{\operatorname{Tr}}

\renewcommand{\i}{\mathrm{i}}

\newcommand{\bpm}{\begin{pmatrix}}
	\newcommand{\epm}{\end{pmatrix}}
\newcommand{\beq}{\begin{equation}}
	\newcommand{\eeq}{\end{equation}}
\newcommand{\ba}{\begin{align}}
	\newcommand{\ea}{\end{align}}
\newcommand{\bi}{\begin{itemize}}
	\newcommand{\ei}{\end{itemize}}

\newcommand{\id}{\mathrm{id}}
\newcommand{\calH}{\mathcal{H}}

\newcommand{\fru}{\mathfrak{u}}
\newcommand{\bfone}{\mathbbm{1}}

\usepackage{orcidlink}

\begin{document}
	
\title{Entanglement depth and ancilla efficiency in quantum channel estimation}
\author{Javid Naikoo\,\orcidlink{0000-0002-1825-0097}}
\email{naikooja@fzu.cz}
\affiliation{Joint Laboratory of Optics, Faculty of Science, Palack\'{y} University, Czech 	Republic, 17. listopadu 12, 779~00 Olomouc, Czech Republic}
\affiliation{Joint Laboratory of Optics of Palack\'{y} University and Institute of Physics of CAS, Institute of Physics of CAS, 17. listopadu 50a, 779 00 Olomouc, Czech Republic}

	\date{\today}
	
\begin{abstract}
We study the role of ancillary entanglement in quantum channel parameter estimation and investigate the minimal ancilla dimension required to achieve the maximum Fisher information. We introduce the $k$-ancilla Fisher information, which quantifies the optimal estimation precision achievable with
input states of rank at most $k$, and derive a variational characterization in terms of a rank-constrained optimization problem. This formulation leads to a simple characterization of the minimum ancilla dimension $k^*$ required for optimal estimation, given by the minimum rank among the maximizers of the associated variational problem. We further identify sufficient conditions under which ancillary entanglement provides no advantage, including channel families admitting a fixed measure-and-prepare representation and channels satisfying a natural horizontality condition. In addition, we derive a bound on the incremental gain in Fisher information obtained by increasing the ancilla dimension under suitable structural assumptions on the optimal input states. The general results are illustrated through explicit examples, including unitary channels, qubit depolarizing  and amplitude damping channels. These results provide a systematic framework for understanding and quantifying the entanglement resources required for optimal quantum channel estimation.
\end{abstract}
\maketitle
\section{Introduction}
Quantum parameter estimation -- the problem of inferring an unknown parameter $\theta$ from measurements on quantum states -- is the central task of quantum metrology~\cite{Helstrom1976,Holevo1982,Braunstein1994,Giovannetti2004, 	Giovannetti2006}. The problem of finding an optimal estimation scheme for a given family of quantum \emph{channels}, rather than a given
family of states, is known as the \emph{quantum channel identification 	problem}, introduced in~\cite{Fujiwara2001} and further developed in~\cite{Fujiwara2003,Fujiwara2008}. In this problem, one is given a one-parameter family of quantum channels $\{\Phi_\theta:\theta\in\Theta\}$ and aims to determine the unknown parameter $\theta$ with the highest possible precision. Unlike ordinary quantum state estimation, where the optimization is performed only over the measurement
procedure, quantum channel identification requires an additional optimization over the input state used to probe the channel. This input may be entangled with an ancillary system, making the role of entanglement a central question
in the theory of quantum channel estimation.

The role of ancilla-assisted strategies in noisy quantum channel estimation has been studied extensively, establishing both when entanglement with an external ancilla helps and how large the required ancilla must
be~\cite{Fujiwara2008,Escher2011, Demkowicz2014,Demkowicz2015, 	Huang2016}. More broadly, entanglement as a resource for quantum metrology has been investigated from both operational and resource-theoretic viewpoints, including its necessity for surpassing shot-noise scaling and its role in noiseless versus noisy settings~\cite{Toth2014,Pezze2018}.
The present work contributes to this line of research by characterizing, within the estimation setting, the entanglement resources required for optimal quantum channel identification, and by identifying structural conditions under which such resources provide no advantage.
\begin{figure}
	\includegraphics[width=\linewidth]{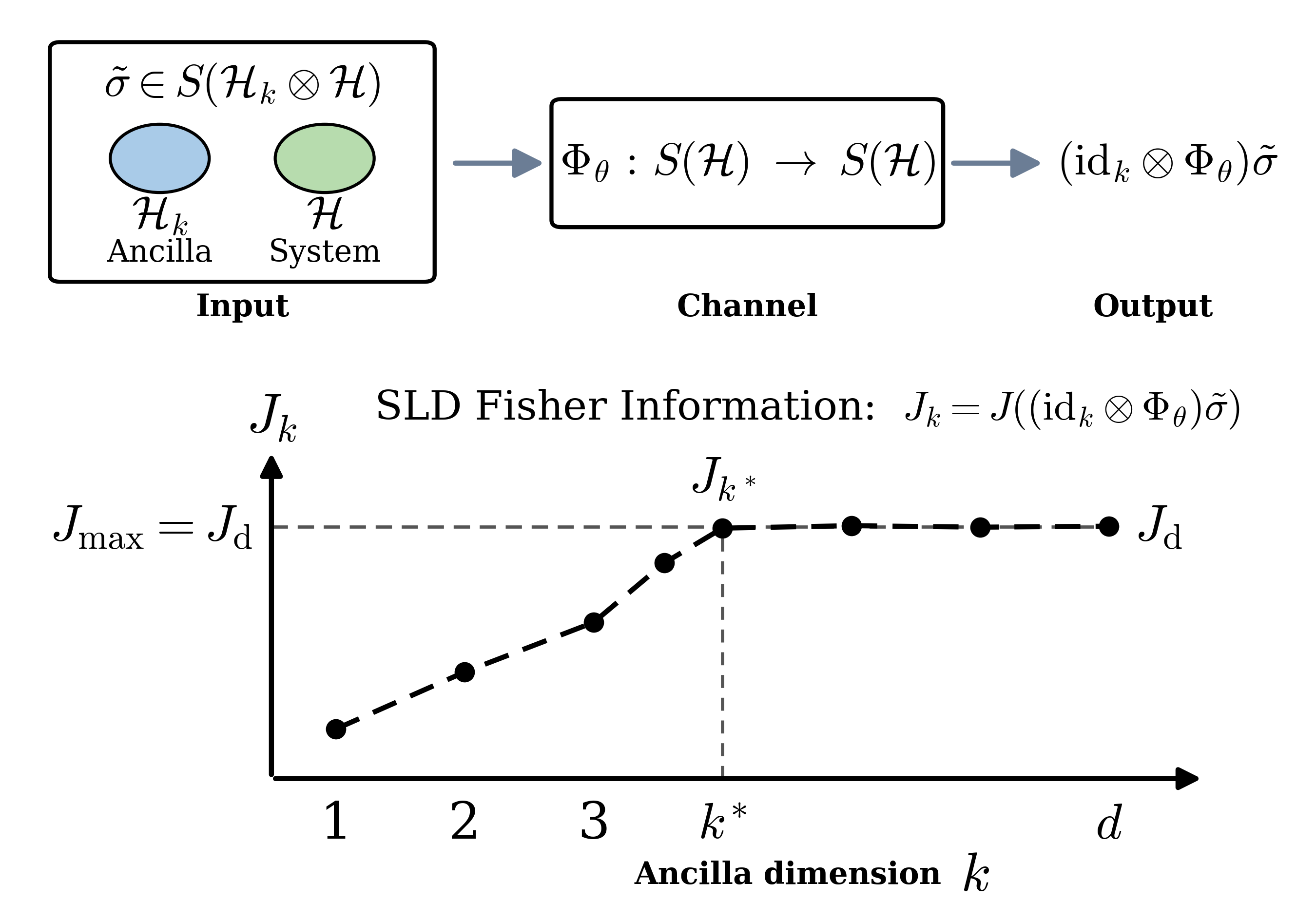}
\caption{\textbf{Schematic of the optimal ancilla dimension problem.}
	A probe state $\tilde{\sigma}\in S(\mathcal{H}_k\otimes\mathcal{H})$, prepared using a $k$-dimensional ancillary system, is sent through the parameterized quantum channel $\Phi_\theta$. The output state $(\mathrm{id}_k\otimes\Phi_\theta)(\tilde{\sigma})$ is used to estimate the parameter $\theta$, with precision quantified by the SLD Fisher information $J_k$. The plot of $J_k$ versus the ancilla dimension $k$ is illustrative only and conveys the qualitative behavior of the attainable Fisher information. A main objective of this work is to characterize the smallest ancilla dimension $k^*$ such that $J_k=J_d$, where $d=\dim\mathcal{H}$.}
	\label{fig:schematic}
\end{figure}

Let $\mathcal{H}$ be a finite-dimensional complex Hilbert space describing the
input system, and let $\mathcal{B}(\mathcal{H})$ and $S(\mathcal{H})$ denote the
sets of linear operators and density matrices on $\mathcal{H}$, respectively.
A quantum channel is represented by a completely positive trace-preserving map
$\Phi_\theta:S(\mathcal{H})\rightarrow S(\mathcal{H})$. Given an input state
$\tilde{\sigma}\in S(\mathcal{H}\otimes\mathcal{H})$, where the first copy of
$\mathcal{H}$ represents the ancillary system and the second copy represents
the input system, the output state is
\begin{equation}
	\rho_\theta=(\id\otimes\Phi_\theta)(\tilde{\sigma}).
\end{equation}
The ultimate statistical distinguishability of the parameter $\theta$ is
quantified by the symmetric logarithmic derivative (SLD) Fisher information.
Thus, the central optimization problem in quantum channel identification is the
maximization  $\max_{\tilde{\sigma}\in S(\mathcal{H}\otimes\mathcal{H})}
J\!\left((\id\otimes\Phi_\theta)(\tilde{\sigma})\right)$. This optimization problem admits an equivalent operator formulation in terms of the Kraus representation of the channel~\cite{Fujiwara2008} 
\begin{equation}
	\max_{\tilde{\sigma}\in S(\mathcal{H}\otimes\mathcal{H})}
	J\!\left((\id\otimes\Phi_\theta)(\tilde{\sigma})\right) = 4\min_{A(\theta)} \left\|
	\sum_j \dot{A}_j(\theta)^\dagger\dot{A}_j(\theta) \right\|, \label{eq:FI_main}
\end{equation}
where the minimum is taken over all smooth families of Kraus representations
$\{A_j(\theta)\}$ of $\Phi_\theta$. This result establishes that entanglement
with an ancillary system can be a useful resource for quantum channel
identification, since the optimal input is not necessarily a product state.

While Eq.~\eqref{eq:FI_main} provides a characterization of the ultimate Fisher information attainable in channel identification, it naturally motivates further questions concerning the resources required to achieve this optimum. In particular, one may ask what is the smallest ancilla dimension sufficient
for optimal estimation? Under what conditions can the same precision be achieved without ancillary entanglement? How does the optimal Fisher information scale as the allowed amount of ancillary entanglement is increased? Figure~\ref{fig:schematic} provides a schematic overview of the ancilla-assisted estimation problem and the resulting dependence of $J_k$ on the ancilla dimension $k$.

In this work, we show that these questions can be naturally formulated in terms of the rank structure of optimal input states. Building on the variational characterization of the channel Fisher information, we use the associated concave function
\begin{equation}
	g(\sigma)
	:=
	\min_{\mathrm{i}X\in\mathfrak{u}(q)}
	f_\theta(\sigma,X),
	\label{eq:g_def}
\end{equation}
where $f_\theta(\sigma,X)$ is the objective function defined in~\eqnref{eq:f_def}. We start by showing that the optimal SLD Fisher information achievable using input states of rank at most $k$ is given by
\begin{equation}
	J_k(\Phi_\theta)
	=
	\max_{\sigma\in S_k(\mathcal{H})}g(\sigma),
\end{equation}
where $S_k(\mathcal{H})$ denotes the set of states with rank at most $k$.
Consequently, the minimum entanglement dimension required to attain the
ultimate Fisher information is determined by the rank of the maximizers of
$g$. In particular, the optimal Schmidt rank is characterized by the smallest
rank among all global maximizers of this concave optimization problem.

We further investigate conditions under which ancillary entanglement is
unnecessary. We derive sufficient conditions ensuring that the extended and
unextended Fisher informations coincide,
\begin{equation}
	J_d(\Phi_\theta)=J_1(\Phi_\theta),
\end{equation}
meaning that the optimal channel identification strategy can be implemented
using a product input. These conditions include channel families admitting a
fixed measure-and-prepare representation and cases in which the relevant
generator contribution becomes linear in the input state, reducing the
optimization to an eigenvalue problem.

\section{Setup and notation}
A quantum channel admits a Kraus (operator-sum) representation $\Phi(\rho)\!=\!\sum_j A_j\rho A_j^{\dagger}$ with $\sum_j A_j^{\dagger}A_j\!=\!\bfone$~\cite{Kraus1983}. The minimum number of Kraus operators is the \emph{Kraus rank} $r\!=\!\rank\Phi$, defined as the rank of the Choi matrix $(\id\otimes\Phi)(|\Omega\rangle\langle\Omega|)$, where $|\Omega\rangle\!=\!d^{-1/2}\sum_i |ii\rangle$ is the maximally entangled state. If $\{A_j\}_{j=1}^r$ and $\{B_k\}_{k=1}^q$ are Kraus representations of the same channel, then there exists a partial isometry $Q\in\mathbb{C}^{r\times q}$ satisfying $QQ^{\dagger}\!=\! \bfone_r$ such that $B_k\!=\! \sum_j A_jQ_{jk}$. In particular, if both representations are minimal ($q\!=\! r$), then $Q$ is unitary~\cite{Fujiwara2008}.
	
For a smooth family $\{\Phi_\theta\}$ with smooth generator family $\{A_j(\theta)\}$, the function $f_\theta(\sigma, X)$ appearing in the variational formula is~\cite{Fujiwara2008}
\begin{align}
f_\theta(\sigma, X)
&:= \sum_k \Tr\, \left[\sigma\dot{B}_k^{\dagger}\dot{B}_k \right]   \nonumber\\
&\quad- \sum_{k,\ell}\Tr\!\left[\i \, \sigma (\dot{B}_k^{\dagger}B_\ell - B_k^{\dagger}\dot{B}_\ell)\right]\,x_{k\ell} \nonumber \\&\quad + \sum_{m} \bigg( \sum_{k,\ell}\Tr\, \left[\sigma B_k^{\dagger}B_\ell \right] \,x_{km} \, \overline{x_{\ell m}} \bigg), 
\label{eq:f_def}
\end{align}
where $B(\theta) \!=\! \{B_k(\theta)\}$ is a fixed smooth reference generator, $A_j(\theta) \!=\! \sum_k B_k(\theta)u_{kj}(\theta)$ for a smooth unitary family $U(\theta) \!=\! [u_{kj}(\theta)]$, and $\i X \!=\! U\dot{U}^{\dagger}\in\fru(q)$. We define the Gram matrix $G(\sigma)_{k\ell} \!:=\! \Tr[\sigma B_k^{\dagger}B_\ell]$, the velocity-overlap matrix $C(\sigma)_{\ell k} \!:=\! \Tr[\sigma(\dot{B}_k^{\dagger}B_\ell - B_k^{\dagger}\dot{B}_\ell)]$, and  $t(\sigma) \!:=\! \sum_k \Tr\, \left[\sigma\dot{B}_k^{\dagger}\dot{B}_k \right]$.

\subsection{Minimization of the generator functional}
We now determine the minimizer in the variational definition of $g(\sigma)$.
The so-called Gram matrix $G(\sigma)$ is Hermitian by construction, while the
``velocity-overlap" matrix $C(\sigma)$ is skew-Hermitia, and consequently 
$\i \, C(\sigma)$ is Hermitian. Using the definitions of $G(\sigma)$ and
\(C(\sigma)\), the functional in Eq.~\eqref{eq:f_def} can be written as
\begin{align}
f_\theta(\sigma,X) 	&= t(\sigma)	-\Tr\!\left[\i \, C(\sigma)X\right] +\Tr\!\left[XG(\sigma)X\right].
	\label{eq:f_compact}
\end{align}
The stationary point is obtained by taking the differential with respect to $X$. 
\begin{align}
	df_\theta(\sigma,X) &= \Tr\!\left[ 	\left( G(\sigma)X + XG(\sigma) - \i \, C(\sigma) \right)dX \right].
\end{align}
Since $dX$ is an arbitrary Hermitian perturbation, the stationary point $X_\ast$ satisfies
\begin{equation}
G(\sigma)X_\ast + X_\ast G(\sigma) 	= \i \, C(\sigma).
\label{eq:lyapunov}
\end{equation}
This is a continuous Lyapunov equation. Assuming $G(\sigma)$ to be positive definite, we consider the linear map
\begin{equation}
\mathcal{L}(X) = G(\sigma)X + XG(\sigma), 
\end{equation}
such that for any nonzero Hermitian matrix $X$,
\begin{align}
\langle X,\mathcal{L}(X)\rangle_F &= \Tr\!\left[ XG(\sigma)X \right] + \Tr\!\left[ X^2G(\sigma) \right] \nonumber\\
&= 2\Tr\!\left[ XG(\sigma)X \right] = 2 || X G^{1/2} ||^2 > 0.
\end{align}
Therefore, $\mathcal{L}$ is positive definite on the real vector space of Hermitian matrices and hence invertible. Equation~\eqref{eq:lyapunov} therefore has a unique Hermitian solution $X_\ast$. In particular, when $G(\sigma)$ and $C(\sigma)$ commute, Eq.~\eqref{eq:lyapunov} simplifies to
\begin{align}
	G(\sigma)X_\ast
	+
	X_\ast G(\sigma)
	&=
	2G(\sigma)X_\ast
	=
	\i \, C(\sigma),
\end{align}
which gives the explicit solution
\begin{equation}
	X_\ast
	=
	\frac{\i}{2}
	G(\sigma)^{-1}C(\sigma).
\end{equation}
Consequently, whenever $G(\sigma)$ is positive definite, the minimizer in the
definition of $g(\sigma)$ is uniquely determined as
\[
X_\ast=\mathcal{L}^{-1}(\mathrm{i}\,C(\sigma)).
\]
Thus, the evaluation of $g(\sigma)$ reduces to the solution of a continuous Lyapunov equation, for which efficient numerical algorithms are available~\cite{BartelsStewart1972,Higham2008}. Although this characterization is not required for the theoretical developments
that follow, it provides a convenient computational method for evaluating $g(\sigma)$.	This provides an explicit characterization of the generator optimization for a fixed input state. We now turn to the remaining optimization over the input state itself, leading to a variational characterization of the $k$-ancilla Fisher information.

\section{The $k$--ancilla Fisher information}
For a $k$-dimensional ancilla $\calH_k$, define
\begin{equation}
J_k(\Phi_\theta) := \max_{\tilde{\sigma}\in S(\calH_k\otimes\calH)}
		J\!\left(({\id}_k\otimes\Phi_\theta)(\tilde{\sigma})\right).
		\label{eq:Jk_def}
\end{equation}
Our first result extends Theorem~4 of~\cite{Fujiwara2008} to arbitrary ancilla dimension.
\begin{theorem}[Variational formula for $J_k$]
\label{thm:Jk}
Let $\{\Phi_\theta\}$ be a smooth, piecewise-regular one-parameter family of quantum
channels, and let $\mathcal{H}_k$ be a $k$-dimensional ancilla space for some
$k\in\{1,\dots,d\}$. Then,  for all $\theta_0\in\Theta$:
\begin{equation}
J_k(\Phi_\theta)\big|_{\theta=\theta_0} = \max_{\sigma\in S_k(\mathcal{H})} g(\sigma)\big|_{\theta=\theta_0}, \label{eq:Jk_formula}
\end{equation}
where $g(\sigma) := \min_{\i X\in\mathfrak{u}(q)} f_{\theta_0}(\sigma,X)$, 
$f_{\theta_0}(\sigma,X)$ is the function defined in \eqref{eq:f_def}, and
\begin{equation}
S_k(\mathcal{H}) := \{\sigma\in S(\mathcal{H}) : \rank\,\sigma \leq k\}.
\end{equation}
\end{theorem}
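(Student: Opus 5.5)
The plan is to reduce the $k$-ancilla problem to Fujiwara--Imai's Theorem~4 in~\cite{Fujiwara2008}, by tracking only the reduced state of the input on $\mathcal{H}$. There are three steps: reduce to pure inputs, express the Fisher information of a pure input as a min over Kraus representations, and handle the max--min exchange, which I expect to be the main obstacle.

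\textbf{Reduction to pure inputs.} The SLD Fisher information is convex, and
\[
(\id_k\otimes\Phi_\theta)(\tilde\sigma)=\sum_i p_i\,(\id_k\otimes\Phi_\theta)(|\psi_i\rangle\langle\psi_i|)
\]
for any pure-state decomposition of $\tilde\sigma$. Hence the maximum in \eqref{eq:Jk_def} is attained on pure states $|\psi\rangle\in\mathcal{H}_k\otimes\mathcal{H}$. Such a pure state has Schmidt rank at most $\min(k,d)=k$, so its reduced state $\sigma=\Tr_k|\psi\rangle\langle\psi|$ lies in $S_k(\mathcal{H})$. Conversely, every $\sigma\in S_k(\mathcal{H})$ has a purification in $\mathcal{H}_k\otimes\mathcal{H}$, because $\rank\sigma\le k$. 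Any two purifications of the same $\sigma$ with ancilla $\mathcal{H}_k$ differ by a unitary on $\mathcal{H}_k$, which commutes with $\id_k\otimes\Phi_\theta$ and leaves $J$ unchanged. So $J$ of the output depends only on $\sigma$; call it $F(\sigma)$. This gives
\[
J_k=\max_{\sigma\in S_k(\mathcal{H})}F(\sigma).
\]

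\textbf{Expressing $F$ via Kraus representations.} Next I would show $F(\sigma)=g(\sigma)$ for every $\sigma$. The argument is purely local to the pure input and independent of $k$. The output is $\sum_j (\bfone\otimes A_j)|\psi\rangle\langle\psi|(\bfone\otimes A_j)^\dagger$. Purifying it by the vector $\sum_j (\bfone\otimes A_j(\theta))|\psi\rangle\otimes|j\rangle$, Uhlmann's theorem (equivalently, Fujiwara's pure-state purification lemma) gives
\[
F(\sigma)=4\min_{A(\theta)}\Big[\sum_j\|(\bfone\otimes\dot A_j)\psi\|^2-\Big|\sum_j\langle(\bfone\otimes A_j)\psi,(\bfone\otimes\dot A_j)\psi\rangle\Big|^2\Big],
\]
with the minimum over smooth Kraus families. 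Every inner product here reduces to a trace against $\sigma$, for example $\langle(\bfone\otimes A_j)\psi,(\bfone\otimes\dot A_j)\psi\rangle=\Tr[\sigma A_j^\dagger\dot A_j]$. At $\theta_0$ the Kraus families are parameterized as $A=BU$ with $\i X=U\dot U^\dagger$. One can choose the phase, i.e.\ shift $X$ by a multiple of $\bfone$, so that $\sum_j\Tr[\sigma A_j^\dagger\dot A_j]=0$. Fujiwara's computation then shows the bracket equals $f_{\theta_0}(\sigma,X)$ up to the normalization implicit in \eqref{eq:f_def}. Minimizing over $X$ gives $F=g$, so this step is essentially Theorem~4 of~\cite{Fujiwara2008} specialized to a fixed $\sigma$. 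The piecewise-regularity hypothesis is what lets a smooth minimizing family $U(\theta)$ be realized locally around $\theta_0$. A rank-deficient $\sigma$, where $G(\sigma)$ is singular, causes no problem: the min over $X$ is still attained since $f$ is a convex quadratic bounded below.

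\textbf{The max--min exchange.} Fujiwara's original proof uses a minimax exchange over all of $S(\mathcal{H})$. Here the outer maximization runs over the non-convex set $S_k(\mathcal{H})$, so no exchange is available. The approach above sidesteps this entirely, because the identity $F(\sigma)=g(\sigma)$ is established pointwise before any maximization. The step I would check most carefully is therefore the pointwise identity itself, in particular that the minimum over Kraus families in the purification formula is attained and equals the min over $X$. If that fails, I would instead prove the two inequalities separately. The bound $J\le f(\sigma,X)$ holds for every $X$ via the purification bound. For the reverse inequality $J\ge g(\sigma)$, I would construct the optimal SLD for the pure output from the minimizing $X_\ast$ of \eqref{eq:lyapunov}.
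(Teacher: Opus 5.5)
Your proposal is correct and follows the paper's proof essentially step for step. Both arguments reduce to pure inputs by convexity. Both then apply Fujiwara's ensemble/purification formula to $\{(\bfone\otimes A_j)|\psi\rangle\}$ to get the pointwise identity $J\big((\id_k\otimes\Phi_\theta)(|\psi\rangle\langle\psi|)\big)=4\min_X f_{\theta_0}(\sigma,X)$. Both finish by noting that the reduced states of pure states on $\mathcal{H}_k\otimes\mathcal{H}$ are exactly $S_k(\mathcal{H})$, so no minimax exchange is needed.

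There are two minor differences. First, you justify that the output depends on $|\psi\rangle$ only through $\sigma$ via the unitary freedom of purifications, whereas the paper reads this off the formula. Second, as in the paper's own proof, your argument actually yields $J_k=4\max_{S_k(\mathcal{H})}g$. The missing factor $4$ in the displayed statement is therefore a normalization slip in the statement, not a gap in your reasoning.
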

	
\begin{proof}
The proof follows in the following three steps:
		
\textit{(i) Reduction to pure inputs.} \\
The extended channel $\mathrm{id}_k\otimes\Phi_\theta$ is a linear map, so the output
state $(\mathrm{id}_k\otimes\Phi_\theta)(\tilde\sigma)$ depends linearly on the input
$\tilde\sigma$. Since the SLD Fisher information $J(\rho_\theta)$ is convex in $\rho$,
the composition $\tilde\sigma\mapsto J((\mathrm{id}_k\otimes\Phi_\theta)(\tilde\sigma))$
is convex in $\tilde\sigma$. A convex function on a compact convex set always achieves
its maximum at an extreme point of that set~\cite{Phelps2001,GeometryQuantumStates}. The extreme points of $S(\mathcal{H}_k\otimes\mathcal{H})$ are precisely the pure states $|\psi\rangle\langle\psi|$, so
\begin{align}
&\max_{\tilde{\sigma} \in S(\mathcal{H}_k \otimes \mathcal{H})}
J \big((\mathrm{id}_k\otimes\Phi_\theta)(\tilde{\sigma})\big) \nonumber \\& \qquad 
= \max_{|\psi\rangle\langle\psi|\in\partial_e S(\mathcal{H}_k\otimes\mathcal{H})}
J\big((\mathrm{id}_k\otimes\Phi_\theta)(|\psi\rangle\langle\psi|\big).
\label{eq:step_i}
\end{align}
		
\textit{ (ii) Applying the variational formula of Theorem 1 of \cite{Fujiwara2008}.}\\
Fix a pure input $|\psi\rangle\in\mathcal{H}_k\otimes\mathcal{H}$ and a smooth generator
family $\{A_j(\theta)\}_{j=1}^q$ of $\Phi_\theta$. The output state is
\begin{align}
\tilde{\rho}_\theta &:= (\mathrm{id}_k\otimes\Phi_\theta)(|\psi\rangle\langle\psi|), \nonumber \\
&= \sum_j (\bfone \otimes A_j(\theta))|\psi\rangle\langle\psi|(\bfone \otimes A_j(\theta))^*,
\end{align}
which shows that $\{(\bfone \otimes A_j(\theta))|\psi\rangle\}_{j=1}^q$ is a $\tilde\rho_\theta$-ensemble of size $q$. The unitary freedom on generators $\{A_j\}\to\{\sum_k A_k u_{kj}\}$ induces exactly the unitary freedom on this ensemble, so Theorem~1 of \cite{Fujiwara2008} applies and gives
\begin{align}
\frac{1}{4}J(\tilde{\rho}_{\theta_0})
&= \min_{A(\theta)} \,\mathrm{Tr}_{\mathcal{H}_k\otimes\mathcal{H}} \!\left[|\psi\rangle\langle\psi|
\!\left(\bfone \otimes\sum_j\dot{A}_j^*\dot{A}_j\right)\right]_{\theta=\theta_0} \nonumber\\
&= \min_{\i X\in\mathfrak{u}(q)} f_{\theta_0}(\sigma,X),
\label{eq:step_ii}
\end{align}
where $\sigma\!:=\! \mathrm{Tr}_1|\psi\rangle\langle\psi|\in S(\mathcal{H})$ is the reduced
state obtained by tracing out the ancilla $\mathcal{H}_k$, and the second equality
follows by the same algebraic reduction as in~\cite{Fujiwara2008} (expanding in terms of
the reference generator $B(\theta)$ and setting $\i \, X\!=\! U\dot{U}^*$). In
particular, the right-hand side of~\eqref{eq:step_ii} depends on the pure state
$|\psi\rangle$ \emph{only} through its reduced state $\sigma\!=\! \mathrm{Tr}_1|\psi\rangle\langle\psi|$.
		
\textit{(iii)  Identifying the range of $\mathrm{Tr}_1$ on pure states of $\mathcal{H}_k\otimes\mathcal{H}$.}
We claim that, as $|\psi\rangle$ ranges over all pure states in $\mathcal{H}_k\otimes\mathcal{H}$, the reduced state $\sigma\!=\! \mathrm{Tr}_1|\psi\rangle\langle\psi|$ ranges precisely over 	$S_k(\mathcal{H})$. Indeed, if $\psi\rangle\!=\! \sum_{i\!=\!1}^k \sqrt{\lambda_i}\, |e_i\rangle\otimes|f_i\rangle$, 	is the Schmidt decomposition of $|\psi\rangle$, then $ \mathrm{Tr}_1|\psi\rangle\langle\psi|\!=\!\sum_{i\!=\!1}^k \lambda_i|f_i\rangle\langle f_i|$, which immediately implies that $\rank(\sigma)\le k$, i.e., $\sigma\in S_k(\mathcal H)$. Conversely, let $\sigma \!\in \! S_k(\mathcal H)$ have spectral decomposition $\sigma\!=\!\sum_{i\!=\!1}^k \lambda_i|f_i\rangle\langle f_i|$,  where zero eigenvalues are included if necessary. Choosing any orthonormal basis $\{|e_i\rangle\}_{i\!=\!1}^k$ of $\mathcal H_k$, the state $|\psi\rangle\!=\! \sum_{i=1}^k \sqrt{\lambda_i}\, |e_i\rangle\otimes|f_i\rangle$ is a purification of $\sigma$, satisfying  $\mathrm{Tr}_1|\psi\rangle\langle\psi| \!=\! \sigma$. Hence, the image of the partial trace over pure states of $\mathcal H_k\otimes\mathcal H$ is exactly $S_k(\mathcal H)$.
		
Combining these facts, the map $\mathrm{Tr}_1\!:\!\partial_e S(\mathcal{H}_k\otimes\mathcal{H})
\to S(\mathcal{H})$ is surjective onto $S_k(\mathcal{H})$ and has \emph{image} exactly 		$S_k(\mathcal{H})$. Since the right-hand side of~\eqref{eq:step_ii} depends on $|\psi\rangle$ only through $\sigma = \mathrm{Tr}_1|\psi\rangle\langle\psi|$, we can replace the maximization over pure states of $\mathcal{H}_k\otimes\mathcal{H}$ with a maximization over $\sigma\in S_k(\mathcal{H})$:
\begin{align}
J_k(\Phi_\theta)\big|_{\theta_0} &\overset{\eqref{eq:step_i}}{=}
			\max_{|\psi\rangle\langle\psi|\in\partial_e S(\mathcal{H}_k\otimes\mathcal{H})}
			4\min_{\i X\in\mathfrak{u}(q)}
			f_{\theta_0}(\mathrm{Tr}_1|\psi\rangle\langle\psi|,X)
			\nonumber\\
			&\overset{\phantom{\eqref{eq:step_i}}}{=}
			\max_{\sigma\in S_k(\mathcal{H})}
			4\min_{\i X\in\mathfrak{u}(q)}
			f_{\theta_0}(\sigma,X)
			\nonumber\\
			&\overset{\phantom{\eqref{eq:step_i}}}{=}
			4\max_{\sigma\in S_k(\mathcal{H})}\,g(\sigma)\big|_{\theta_0},
\end{align}
which is~\eqref{eq:Jk_formula}.
\end{proof}

A few remarks are in order. The sets $S_k(\mathcal{H})$ form a strictly nested chain:
\begin{equation}
S_1(\mathcal{H}) \subset S_2(\mathcal{H}) \subset \cdots \subset S_d(\mathcal{H})
		= S(\mathcal{H}),
		\label{eq:nested}
\end{equation}
where the last equality holds because every $\sigma\in S(\mathcal{H})$ has rank at most
$d \!=\! \dim\mathcal{H}$. The inclusions are strict because, for each $k$, there exist states
of rank exactly $k$ (e.g.\ the equal mixture of $k$ orthonormal pure states), which
belong to $S_k(\mathcal{H})$ but not to $S_{k-1}(\mathcal{H})$.
	
Since $g(\sigma)$ is the same function in each case and is being maximized over a larger
domain at each step, the chain~\eqref{eq:nested} immediately gives the monotonicity:
\begin{equation}
		J_1(\Phi_\theta) \leq J_2(\Phi_\theta) \leq \cdots \leq J_d(\Phi_\theta).
		\label{eq:monotonicity}
\end{equation}
Each inequality $J_{k-1} \!\leq \! J_k$ holds simply because
$S_{k-1}(\mathcal{H})\!\subset \! S_k(\mathcal{H})$ -- the maximum of $g$ over a larger set
can only be equal or larger.  The inequality is strict ($J_{k-1} \!<\! J_k$) whenever
the global maximizer $\sigma^*$ of $g$ over $S(\mathcal{H})$ has rank exactly $k$ or
larger, so that it is not accessible in the smaller domain $S_{k-1}(\mathcal{H})$.
	
At the two ends of the chain we have $J_1(\Phi_\theta) \!=\! \max_{\sigma\in\partial_e
S(\mathcal{H})}g(\sigma)$ which  is the \emph{unextended} Fisher information (no ancilla,
optimized over pure input states only), and  $J_d(\Phi_\theta) \!=\! \max_{\sigma\in
S(\mathcal{H})}g(\sigma)$ recovers the \emph{fully extended} Fisher information of
Theorem~4 in~\cite{Fujiwara2008}, since $S_d(\mathcal{H}) \!=\! S(\mathcal{H})$. The
sequence~\eqref{eq:monotonicity} therefore interpolates between these two extremes,
with each term $J_k$ quantifying precisely the Fisher information achievable when the
experimenter has access to a $k$-dimensional ancilla.

This naturally raises the question of determining the smallest ancilla
dimension for which no further improvement is possible.
	
		\section{Entanglement depth}
	
	\begin{definition}[Entanglement depth]
		We define the \emph{entanglement depth} of the channel family $\{\Phi_\theta\}$  as
		\begin{equation}
			k^*(\Phi_\theta)
			:= \min\{k\in\{1,\dots,d\} : J_k(\Phi_\theta) = J_d(\Phi_\theta)\}.
		\end{equation}
		In words, $k^*$ is the smallest ancilla dimension sufficient to achieve the maximum
		Fisher information of the fully extended channel.
	\end{definition}
	
To characterize $k^*$, we first establish a structural property of the objective function $g$. This concavity property will allow us to relate the optimal ancilla dimension to the rank of an optimal input state.
\begin{lemma}[Concavity of $g$]
\label{lem:concave}
The function $g:S(\mathcal{H})\to\mathbb{R}$ defined by $g(\sigma)=\min_{X\in\mathfrak{u}(q)}f_{\theta_0}(\sigma,X)$ is concave in $\sigma$.
\end{lemma}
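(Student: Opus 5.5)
The plan is to use the standard fact that a pointwise infimum of affine functions is concave. I would first observe that, for each fixed Hermitian $X$ (equivalently each $\i X\in\fru(q)$), the map $\sigma\mapsto f_{\theta_0}(\sigma,X)$ is affine in $\sigma$, and in fact the restriction of a real-linear functional on Hermitian operators. This is visible from the compact form \eqref{eq:f_compact}. The quantity $t(\sigma)=\sum_k\Tr[\sigma\dot B_k^\dagger\dot B_k]$ is linear in $\sigma$. The Gram matrix $G(\sigma)_{k\ell}=\Tr[\sigma B_k^\dagger B_\ell]$ and the velocity-overlap matrix $C(\sigma)$ have entries that are linear in $\sigma$. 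Therefore $\Tr[\i C(\sigma)X]$ and $\Tr[XG(\sigma)X]$ are linear in $\sigma$ for fixed $X$. These terms are real, since $G(\sigma)$ and $\i C(\sigma)$ are Hermitian and $X$ is Hermitian, so $f_{\theta_0}(\cdot,X)$ is a real-valued affine function on $S(\calH)$.

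Next I would write, for $\sigma_1,\sigma_2\in S(\calH)$ and $p\in[0,1]$, $\sigma=p\sigma_1+(1-p)\sigma_2$. For every Hermitian $X$, linearity gives
\begin{equation}
f_{\theta_0}(\sigma,X)=p\,f_{\theta_0}(\sigma_1,X)+(1-p)\,f_{\theta_0}(\sigma_2,X)\ge p\,g(\sigma_1)+(1-p)\,g(\sigma_2).
\end{equation}
Taking the infimum over $X$ on the left then yields $g(\sigma)\ge p\,g(\sigma_1)+(1-p)\,g(\sigma_2)$, which is concavity.

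The only point needing care is that $g$ is well defined and finite, i.e.\ that the minimum exists rather than being $-\infty$. The argument above actually works verbatim with $\inf$, provided $g>-\infty$. Finiteness follows from the earlier identification \eqref{eq:step_ii}: $g(\sigma)=\tfrac14 J(\tilde\rho_{\theta_0})\ge0$ for any purification of $\sigma$. Since $d$-dimensional ancillas purify every $\sigma\in S(\calH)$ (step (iii) of Theorem~\ref{thm:Jk}), this holds on all of $S(\calH)$.

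Alternatively, one can argue directly. $f_{\theta_0}(\sigma,\cdot)$ is a quadratic form in $X$ with positive semidefinite quadratic part $\Tr[XG(\sigma)X]\ge0$. On the kernel directions of $G(\sigma)$ the linear term must vanish, because $f\ge0$ via the Fisher-information interpretation. Hence the infimum is attained, and when $G(\sigma)>0$ it is attained at the Lyapunov solution $X_\ast$ of \eqref{eq:lyapunov}.

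I expect this attainment and finiteness issue for singular $G(\sigma)$ to be the only mild obstacle. The concavity itself is immediate once affinity in $\sigma$ is noted.
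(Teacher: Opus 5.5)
Your proof is correct and is essentially the paper's argument: for each fixed $X$ the map $\sigma\mapsto f_{\theta_0}(\sigma,X)$ is affine, so $g$ is a pointwise minimum of affine functions and hence concave. Your extra check that $g$ is finite and that the minimum is attained, using $f_{\theta_0}(\sigma,X)\ge 0$, is something the paper's proof takes for granted, and you handle it correctly.
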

\begin{proof}
	For every fixed $X$, the function
	$\sigma\mapsto f_{\theta_0}(\sigma,X)$ is affine in $\sigma$, since
	each term in \eqref{eq:f_def} depends linearly on $\sigma$. Hence, for
	any $\sigma_1,\sigma_2\in S(\mathcal H)$ and $\lambda\in[0,1]$,
	\begin{align*}
		g(\lambda\sigma_1+(1-\lambda)\sigma_2)
		&=\min_X\bigl[\lambda f(\sigma_1,X)
		+(1-\lambda)f(\sigma_2,X)\bigr]\\
		&\ge
		\lambda\min_Xf(\sigma_1,X)
		+(1-\lambda)\min_Xf(\sigma_2,X)\\
		&=
		\lambda g(\sigma_1)
		+(1-\lambda)g(\sigma_2),
	\end{align*}
	which proves that $g$ is concave.
\end{proof}

Moreover, $g$ is continuous, since $f_{\theta_0}(\sigma,X)$ is continuous in $\sigma$ and the minimization is taken over the compact set $\mathfrak{u}(q)$. Since $S(\mathcal{H})$ is compact, the maximization 	problem $\max_{\sigma\in S(\mathcal{H})}g(\sigma)$ admits at least one maximizer~\cite{Bertsekas1999}. Let $\mathcal{M}$ denote the set of all	maximizers:
	\begin{equation}
		\mathcal{M}:=\arg\max_{\sigma\in S(\mathcal{H})}g(\sigma).
	\end{equation}
	
	The set $\mathcal{M}$ is convex because it is the upper level set~\cite{BoydVandenberghe2004}
	\[
	\mathcal{M}
	=
	\{\sigma\in S(\mathcal{H}):g(\sigma)\geq \max_{\tau\in S(\mathcal{H})}g(\tau)\}
	\]
	of the concave function $g$. It is also compact because it is a closed
	subset of the compact set $S(\mathcal{H})$. In general,
	$\mathcal{M}$ may contain more than one element. The following theorem shows that $k^*$ equals the
	minimum rank over all elements of $\mathcal{M}$.
	
\begin{theorem}[Rank characterization of $k^*$]
	\label{thm:kstar}
Let
\begin{equation}
\mathcal{M} := \arg\max_{\sigma\in S(\mathcal{H})} g(\sigma)
\end{equation}
be the (non-empty, convex, compact) set of all maximizers of $g$ over $S(\mathcal{H})$,	and define
\begin{equation}
\sigma^* := \arg\min_{\sigma\in\mathcal{M}} \rank\,\sigma,
\end{equation}
i.e.\ the element of $\mathcal{M}$ with the smallest possible rank. Then
\begin{equation}
k^* = \rank\,\sigma^*.
\end{equation}
\end{theorem}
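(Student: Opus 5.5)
The plan is to prove the two inequalities $k^*\le \rank\sigma^*$ and $k^*\ge\rank\sigma^*$ separately, using only Theorem~\ref{thm:Jk}, the nested chain \eqref{eq:nested}, and the definition of $\mathcal{M}$. Write $m:=\rank\sigma^*$ and $g^*:=\max_{\tau\in S(\mathcal{H})}g(\tau)$. By Theorem~\ref{thm:Jk} with $k=d$ and $S_d(\mathcal{H})=S(\mathcal{H})$, we have $J_d=4g^*$ (we adopt the normalization with the factor $4$ used in the proof of Theorem~\ref{thm:Jk}; it plays no role in comparisons). The well-posedness of $\sigma^*$ is immediate: ranks take values in $\{1,\dots,d\}$ and $\mathcal{M}\neq\emptyset$, so the minimum rank over $\mathcal{M}$ is attained. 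If several minimizers exist, any one may be chosen, since only the value $m$ enters.

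For $k^*\le m$, note that $\sigma^*\in S_m(\mathcal{H})$. Theorem~\ref{thm:Jk} then gives
\begin{equation*}
J_m=4\max_{\sigma\in S_m(\mathcal{H})}g(\sigma)\ge 4g(\sigma^*)=4g^*=J_d .
\end{equation*}
Combined with the monotonicity \eqref{eq:monotonicity}, this yields $J_m=J_d$. Hence $m$ belongs to the set over which $k^*$ is minimized, and so $k^*\le m$.

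For $k^*\ge m$, suppose instead that $k:=k^*<m$. By the definition of $k^*$ and Theorem~\ref{thm:Jk}, $\max_{\sigma\in S_k(\mathcal{H})}g(\sigma)=g^*$. The key point is that this maximum over $S_k(\mathcal{H})$ is actually attained. I would justify this by showing that $S_k(\mathcal{H})$ is compact: it is closed because rank is lower semicontinuous, so $\{\rank\le k\}$ is closed (equivalently, it is the zero set of all $(k{+}1)$-minors), and it is bounded as a subset of $S(\mathcal{H})$. Continuity of $g$, as noted after Lemma~\ref{lem:concave}, then gives some $\hat\sigma\in S_k(\mathcal{H})$ with $g(\hat\sigma)=g^*$. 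Such a $\hat\sigma$ lies in $\mathcal{M}$ and has $\rank\hat\sigma\le k<m$, contradicting the minimality of $m$. Therefore $k^*\ge m$, and together with the first part, $k^*=\rank\sigma^*$.

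The main obstacle is this attainment step. Everything else is bookkeeping, but the argument needs the maximum over the rank-constrained set $S_k(\mathcal{H})$ to be realized by an actual state. This requires both compactness of $S_k(\mathcal{H})$ and continuity of $g$, and the latter is where care is needed, because the minimization over $X$ in the definition of $g$ runs over a noncompact set. If the remark following Lemma~\ref{lem:concave} is considered insufficient, I would fall back on a direct argument. As the pointwise minimum of a family of affine functions, $g$ is upper semicontinuous, and an upper semicontinuous function attains its maximum on a compact set. This suffices for the attainment step without needing full continuity.
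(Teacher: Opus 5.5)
Your proof is correct and follows the paper's argument: the upper bound $k^*\le\rank\sigma^*$ comes from $\sigma^*\in S_{\rank\sigma^*}(\mathcal{H})$, and the lower bound comes by contradiction with the minimality of $\rank\sigma^*$ in $\mathcal{M}$. The one addition is your explicit justification that the maximum of $g$ over $S_k(\mathcal{H})$ is attained (closedness of the rank-$\le k$ states plus upper semicontinuity of $g$ as an infimum of affine functions). The paper uses this step without comment, and your version is more careful than its continuity remark, which wrongly calls $\mathfrak{u}(q)$ compact.
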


\begin{proof}
	Let $r^* \!:=\! \rank\,\sigma^*$. We prove $k^* \!=\! r^*$ by establishing the two inequalities
	$k^*\!\leq \! r^*$ and $k^*\!\geq\! r^*$ separately.
	
	\textit{(i) Upper bound: $k^*\leq r^*$, equivalently $J_{r^*} = J_d$.}\\
	  By the definition of $\sigma^*$, we have $\sigma^*\in\mathcal{M}$, meaning
	\begin{equation}
		g(\sigma^*) = \max_{\sigma\in S(\mathcal{H})} g(\sigma) = \frac{J_d}{4}.
		\label{eq:sigma_star_max}
	\end{equation}
	Since $\rank\,\sigma^* = r^*$, the state $\sigma^*$ belongs to $S_{r^*}(\mathcal{H})$.
	Therefore, when we maximize $g$ over the smaller domain $S_{r^*}(\mathcal{H})$, we
	achieve at least the value $g(\sigma^*)$:
	\begin{equation}
		\max_{\sigma\in S_{r^*}(\mathcal{H})} g(\sigma)
		\geq g(\sigma^*)
		= \max_{\sigma\in S(\mathcal{H})} g(\sigma).
	\end{equation}
	But maximizing over a subset can never exceed maximizing over the whole set, so
	$\max_{S_{r^*}(\mathcal{H})} g \leq \max_{S(\mathcal{H})} g$ as well.  Together, these lead to
	\begin{equation}
		\max_{\sigma\in S_{r^*}(\mathcal{H})} g(\sigma)
		= \max_{\sigma\in S(\mathcal{H})} g(\sigma),
	\end{equation}
	which gives $J_{r^*} = 4\max_{S_{r^*}(\mathcal{H})}g = J_d$. By the definition of $k^*$
	as the \emph{smallest} $k$ with $J_k = J_d$, we conclude $k^*\leq r^*$.
	
	\textit{(ii) Lower bound: $k^*\geq r^*$, equivalently $J_k < J_d$ for all $k < r^*$:}	Fix any $k$ with $1\leq k < r^*$. We claim
	\begin{equation}
		\max_{\sigma\in S_k(\mathcal{H})} g(\sigma) < \max_{\sigma\in S(\mathcal{H})} g(\sigma).
		\label{eq:strict_ineq}
	\end{equation}
	Suppose for contradiction that \eqnref{eq:strict_ineq} fails, i.e.\ that equality holds.
	Then there exists $\hat\sigma\in S_k(\mathcal{H})$ such that
	\begin{equation}
		g(\hat\sigma) = \max_{\sigma\in S(\mathcal{H})} g(\sigma) = g(\sigma^*).
	\end{equation}
	This means $\hat\sigma\in\mathcal{M}$: it is a maximizer of $g$ over $S(\mathcal{H})$.
	But $\hat\sigma\in S_k(\mathcal{H})$ means $\rank\,\hat\sigma\leq k < r^*$, so
	$\hat\sigma$ is a maximizer of $g$ with strictly smaller rank than $\sigma^*$. This
	contradicts the definition of $\sigma^*$ as the element of $\mathcal{M}$ with minimum
	rank.
	
	Therefore \eqnref{eq:strict_ineq} holds strictly for every $k < r^*$, giving
	$J_k = 4\max_{S_k(\mathcal{H})}g < J_d$. Since $J_k < J_d$ for all $k < r^*$, the
	smallest $k$ with $J_k = J_d$ cannot be less than $r^*$, i.e.\ $k^*\geq r^*$.
	
We conclude from $(i)$ and $(ii)$ that  $r^*\leq k^*\leq r^*$, hence $k^* = r^*$.
\end{proof}
Theorem~\ref{thm:kstar} reduces the determination of the optimal ancilla
dimension to identifying the rank of a global maximizer of $g$. While this
characterization is complete, and is illustrated by examples in \secref{sec:SpecialCases}, solving the underlying optimization problem may be challenging in general. It is therefore natural to seek readily verifiable conditions under which the optimum is attained by a pure state, so that $k^*=1$. We now establish two such sufficient conditions.

	\section{When is entanglement unnecessary?}
	Given a channel family $\Phi_\theta$, let $J_d$ denote the QFI attainable with an ancilla-assisted probe and let $J_1$ denote the QFI attainable with the best unentangled input state. We refer to the difference between these two quantities as the gap in achievable Fisher information. Our interest here is in the case of the zero gap, where $J_d=J_1$ and ancillary entanglement provides no improvement in the achievable Fisher information.

	\begin{theorem}[Zero-gap conditions]
		\label{thm:zero_gap}
		The following are sufficient conditions for $k^* = 1$ (entanglement gives no advantage):
		\begin{enumerate}
			\item[\emph{(i)}] $\Phi_\theta$ admits a \emph{measure-and-prepare representation, also known as Holevo form, with a fixed measurement}, i.e., there exists a POVM $\{M_i\}_{i}$ on $\mathcal{H}$, independent of $\theta$, and a family of states $\xi_{i,\theta}\in S(\mathcal{H})$ such that~\cite{Holevo1998}
			\begin{equation}
				\Phi_\theta(X) = \sum_i \operatorname{Tr}[M_i X]\,\xi_{i,\theta},
				\label{eq:mp_rep}
			\end{equation}
			for input state $X$, and for all $\theta$  near $\theta_0$.
			\item[\emph{(ii)}] The reference generator curve $B(\theta)$ is horizontal: $\dot{B}_k^*B_\ell = B_k^*\dot{B}_\ell$ for all $k,\ell$ (equivalently, $C(\sigma) = 0$ for all $\sigma$).
		\end{enumerate}
	\end{theorem}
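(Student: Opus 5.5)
The plan is to reduce both conditions to one mechanism. Lemma~\ref{lem:concave} shows that $g$ is always concave. Under either hypothesis I will show that $g$ is in fact \emph{affine} in $\sigma$ on $S(\mathcal{H})$. An affine function on the compact convex set $S(\mathcal{H})$ attains its maximum at an extreme point, that is, at a pure state. Hence $\mathcal{M}$ contains a rank-one element, and Theorem~\ref{thm:kstar} gives $k^*=1$.

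\emph{Condition (ii).} If $\dot B_k^*B_\ell=B_k^*\dot B_\ell$ for all $k,\ell$, then $C(\sigma)=0$ for every $\sigma$. By \eqnref{eq:f_compact},
\[
f_{\theta_0}(\sigma,X)=t(\sigma)+\Tr[XG(\sigma)X].
\]
Since $G(\sigma)\ge 0$, the second term is nonnegative and vanishes at $X=0$. Therefore
\[
g(\sigma)=t(\sigma)=\Tr\bigl[\sigma\textstyle\sum_k\dot B_k^*\dot B_k\bigr],
\]
which is linear in $\sigma$. Its maximum over $S(\mathcal{H})$ is the largest eigenvalue of $\sum_k\dot B_k^*\dot B_k$, attained at the projector onto a corresponding eigenvector. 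This case is immediate and needs no invertibility of $G$.

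\emph{Condition (i).} Here I would build an explicit smooth Kraus family adapted to the measure-and-prepare form.
\begin{enumerate}
\item Write each POVM element as $M_i=\sum_b|m_{ib}\rangle\langle m_{ib}|$, with vectors independent of $\theta$.
\item Choose a smooth decomposition $\xi_{i,\theta}=\sum_a|v_{ia}(\theta)\rangle\langle v_{ia}(\theta)|$ with unnormalized vectors $|v_{ia}(\theta)\rangle$.
\item Take the reference generators $B_{iab}(\theta)=|v_{ia}(\theta)\rangle\langle m_{ib}|$.
\end{enumerate}
Because the $\theta$-dependence sits entirely in the output vectors, every operator entering $f$ factorizes:
\[
\dot B_{iab}^*\dot B_{jcd}=\langle\dot v_{ia}|\dot v_{jc}\rangle\,|m_{ib}\rangle\langle m_{jd}|,
\]
and similarly for the other products. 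This expresses $G(\sigma)$, $C(\sigma)$ and $t(\sigma)$ through the scalar data $\langle v|v\rangle$, $\langle\dot v|v\rangle$, $\langle\dot v|\dot v\rangle$ and the matrix $\langle m_{jd}|\sigma|m_{ib}\rangle$.

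The goal is then to show that
\[
g(\sigma)=\sum_i\Tr[M_i\sigma]\,c_i,\qquad c_i=\tfrac14 J(\xi_{i,\theta}),
\]
by proving a matching upper and lower bound.

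\emph{Upper bound.} Here I would use an operational argument rather than the algebra. We have $\Phi_\theta=\mathcal{P}_\theta\circ\mathcal{E}$, where $\mathcal{E}$ is the fixed quantum-to-classical measurement channel and $\mathcal{P}_\theta(|i\rangle\langle i|)=\xi_{i,\theta}$. For any pure input $|\psi\rangle$ with reduced state $\sigma$, the output has the form
\[
\sum_i p_i\,\omega_i\otimes\xi_{i,\theta},\qquad p_i=\Tr[M_i\sigma],
\]
with $\theta$-independent ancilla states $\omega_i$. By the monotonicity of the SLD Fisher information under the $\theta$-independent flag-preparation map, its Fisher information is at most that of the flagged state $\sum_ip_i|i\rangle\langle i|\otimes\omega_i\otimes\xi_{i,\theta}$. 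That flagged state has Fisher information $\sum_ip_iJ(\xi_{i,\theta})$. Using \eqref{eq:step_ii}, this gives $g(\sigma)\le\sum_i\Tr[M_i\sigma]c_i$.

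\emph{Lower bound (main obstacle).} I expect this step to be the hard one. The obvious choice, a pure product input, produces the \emph{mixture} $\sum_ip_i\xi_{i,\theta}$ without flags, and convexity of $J$ points the wrong way. So I would establish the reverse inequality directly in the Kraus picture.
\begin{enumerate}
\item With the adapted family above, restrict the generator $X$ to be block diagonal in the label $i$. Each block then only sees $\sigma$ through the coefficient $\Tr[M_i\sigma]$.
\item The off-diagonal blocks of $X$ couple terms of the form $\langle m_{jd}|\sigma|m_{ib}\rangle$ with $i\neq j$. The key claim is that these couplings cannot lower $f$ below its block-diagonal minimum.
\item With block-diagonal $X$, the minimization decouples into independent state-estimation problems for the $\xi_{i,\theta}$. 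By Theorem~1 of~\cite{Fujiwara2008} applied to each $\xi_{i,\theta}$, the $i$-th block contributes $\Tr[M_i\sigma]\,c_i$.
\end{enumerate}
I would try to justify step 2 by exhibiting the minimizer of the Lyapunov equation \eqref{eq:lyapunov} in block form. If that fails, I would fall back on a weaker statement that still suffices: that $g$ agrees with the linear upper bound at its maximizing pure states.
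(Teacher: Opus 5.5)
Your treatment of (ii) is correct and matches the paper's argument; you also supply the missing justification, via $G(\sigma)\ge 0$, that the minimum is at $X=0$. Your upper bound in (i) is correct too, and it is the paper's convexity step $J(\omega_\theta)\le\sum_i p_iJ(\xi_{i,\theta})$. One small fix: the $\theta$-independent map you need is the one that \emph{discards} the flag, sending the flagged state to $\omega_\theta$, not one that prepares it.

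The genuine gap is the lower bound, and it cannot be repaired: the identity $g(\sigma)=\sum_i\Tr[M_i\sigma]\,\tfrac14J(\xi_{i,\theta})$ is false. For a pure $\sigma=|\phi\rangle\langle\phi|$, \eqref{eq:step_ii} with a trivial ancilla gives $4g(\sigma)=J\bigl(\sum_ip_i\xi_{i,\theta}\bigr)$ with $p_i=\Tr[M_i\sigma]$. This is generally strictly below $\sum_ip_iJ(\xi_{i,\theta})$, as you noticed yourself. So $g$ is not affine, and your ``key claim'' that off-diagonal blocks of $X$ cannot lower $f$ below its block-diagonal minimum fails. In any case, restricting $X$ to block-diagonal form only bounds $g$ from above. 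To close the argument via your upper bound, you would need $J_1\ge\max_\sigma\sum_i\Tr[M_i\sigma]J(\xi_{i,\theta})$, and this does not follow from the fixed-measurement form; your fallback fails for the same reason.

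In fact (i), as literally stated, is false. Take a qubit with the equatorial trine $|\phi_i\rangle$, $M_i=\tfrac23|\phi_i\rangle\langle\phi_i|$, and $\xi_{i,\theta}=e^{-\i\theta Z/2}|\phi_i\rangle\langle\phi_i|e^{\i\theta Z/2}$.
\begin{itemize}
\item The channel $\Phi_\theta$ sends the Bloch vector $(s_x,s_y,s_z)$ to $\tfrac12R_z(\theta)(s_x,s_y,0)$.
\item Hence every unassisted output is a rigidly rotating Bloch vector, with $J(\Phi_\theta(\rho))=\tfrac14(s_x^2+s_y^2)$, so $J_1=1/4$.
\item The maximally entangled input yields $\tfrac12|\Phi^+_\theta\rangle\langle\Phi^+_\theta|+\tfrac14(|01\rangle\langle01|+|10\rangle\langle10|)$, where $|\Phi^+_\theta\rangle=(\bfone\otimes e^{-\i\theta Z/2})|\Phi^+\rangle$. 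Its SLD Fisher information is $1/2$, so $k^*=2$.
\item Your formula would give $g\equiv 1/4$ here, i.e.\ $J=1$ for every input.
\end{itemize}

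The paper avoids this by imposing, inside its proof, an extra hypothesis: each $\xi_{i,\theta}=\Phi_\theta(\rho_i)$ for some $\theta$-independent $\rho_i$. With it, your upper bound closes at once, since $\sum_ip_iJ(\xi_{i,\theta})=\sum_ip_iJ(\Phi_\theta(\rho_i))\le J_1$. No affineness of $g$ is needed, and it would not hold even then, e.g.\ for classical--quantum channels whose $\xi_{i,\theta}$ mix destructively.
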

	
	\begin{proof}
		\emph{(i)} \emph{Proof.} We always have the trivial inequality $J_1\leq J_d$, which follows from the following argument. For any $\tau_0\in S(\mathcal{H})$, the product state $\sigma_0\otimes\tau_0\in S(\mathcal{H}\otimes\mathcal{H})$ is a valid input, and
		\begin{equation}
			J_d
			\geq J\big((\mathrm{id}\otimes\Phi_\theta)(\sigma_0\otimes\tau_0)\big)
			= J(\Phi_\theta(\tau_0)).
		\end{equation}
		Taking the maximum over $\tau_0$ gives $J_d\geq J_1$.
		
		For the reverse inequality, suppose $\Phi_\theta$ has the Holevo form \eqref{eq:mp_rep}. In addition, assume that for every $i$ there exists a $\theta$--independent state $\rho_i\in S(\mathcal{H})$ such that
		\begin{equation}
			\Phi_\theta(\rho_i)=\xi_{i,\theta}
		\end{equation}
		for all $\theta$ in a neighborhood of $\theta_0$. Equivalently, each output state $\xi_{i,\theta}$ in the measure-and-prepare representation is itself realizable as the channel output corresponding to a fixed input state.
		
		Fix any input $\tilde\sigma\in S(\mathcal{H}\otimes\mathcal{H})$, and consider the output
		\begin{equation}
			\omega_\theta := (\mathrm{id}\otimes\Phi_\theta)(\tilde\sigma).
		\end{equation}
		Since ${M_i}$ is $\theta$--independent and acts only on the second factor, \eqref{eq:mp_rep} gives
		\begin{equation}
			\omega_\theta
			= \sum_i \big(\mathrm{id}\otimes\operatorname{Tr}[M_i,\cdot,]\big)(\tilde\sigma)
			\otimes \xi_{i,\theta}
			= \sum_i p_i,\tau_i\otimes\xi_{i,\theta},
			\label{eq:sep_output_corrected}
		\end{equation}
		where, for $p_i>0$,
		\begin{equation}
			p_i := \operatorname{Tr}\big[(\bfone\otimes M_i)\tilde\sigma\big],
			\qquad
			\tau_i := \frac{1}{p_i}
			\operatorname{Tr}_2\big[(\bfone\otimes M_i)\tilde\sigma\big].
		\end{equation}
		Terms for which $p_i=0$ can simply be omitted from the sum.
		
		Crucially, because $\tilde\sigma$ and ${M_i}$ are both $\theta$--independent, so are $p_i\geq 0$ (with $\sum_i p_i=1$) and $\tau_i\in S(\mathcal{H})$. Hence all of the $\theta$--dependence of $\omega_\theta$ is confined to the states $\xi_{i,\theta}$. Thus \eqref{eq:sep_output_corrected} is a bona fide convex decomposition of the curve $\theta\mapsto\omega_\theta$ with constant weights, and the convexity of $J$ applies directly:
		\begin{equation}
			J(\omega_\theta)
			\leq \sum_i p_i,J(\tau_i\otimes\xi_{i,\theta})
			= \sum_i p_i,J(\xi_{i,\theta}),
			\label{eq:convexity_step_corrected}
		\end{equation}
		where the equality uses the fact that $\tau_i$ is independent of $\theta$, so the Fisher information of $\tau_i\otimes\xi_{i,\theta}$ comes entirely from the $\xi_{i,\theta}$ factor. Since each prepared state $\xi_{i,\theta}$ is itself the output of $\Phi_\theta$ corresponding to a fixed input state $\rho_i$, we have
	\begin{equation}
		J(\xi_{i,\theta})
		=
		J(\Phi_\theta(\rho_i))
		\leq
		\max_{\sigma\in S(\mathcal H)}
		J(\Phi_\theta(\sigma))
		=
		J_1.
		\label{eq:each_term}
	\end{equation}
Substituting \eqref{eq:each_term} into \eqref{eq:convexity_step_corrected}, we obtain
		\begin{equation}
			J(\omega_\theta)
			\leq \sum_i p_i,J_1
			=J_1.
		\end{equation}
Since this bound holds for the output of every input
		$\tilde\sigma\in S(\mathcal{H}\otimes\mathcal{H})$, taking the maximum over all inputs gives
		\begin{equation}
			J_d
			=\max_{\tilde\sigma\in S(\mathcal{H}\otimes\mathcal{H})}
			J\big((\mathrm{id}\otimes\Phi_\theta)(\tilde\sigma)\big)
			\leq J_1.
		\end{equation}
		Combining this with the trivial inequality $J_1\leq J_d$, we obtain
		\begin{equation}
			J_1\leq J_d\leq J_1
			\implies
			J_d=J_1.
		\end{equation}
		Therefore, under the fixed-measurement Holevo representation together with the condition that each prepared state $\xi_{i,\theta}$ is itself attainable as $\Phi_\theta(\rho_i)$ from a $\theta$--independent input state $\rho_i$, no entangled input strategy can outperform the optimal unentangled input strategy and hence $k^*\!=\!1$.

		\medskip
		\emph{(ii)} When $C(\sigma)=0$, the function $g$ reduces to
		\begin{equation}
			g(\sigma)
			=
			\sum_k
			\Tr[\sigma\dot{B}_k^*\dot{B}_k]
			=
			\Tr(\sigma A),
		\end{equation}
		where
		\begin{equation}
			A
			:=
			\sum_k
			\dot{B}_k^*\dot{B}_k.
		\end{equation}
		Since $A$ is Hermitian, every state $\sigma\in S(\mathcal{H})$ admits a spectral decomposition
		$\sigma=\sum_i p_i|\psi_i\rangle\langle\psi_i|$, and therefore
		\begin{equation}
			\Tr(\sigma A)
			=
			\sum_i
			p_i
			\langle\psi_i|A|\psi_i\rangle.
		\end{equation}
		Each expectation value satisfies
		$\langle\psi_i|A|\psi_i\rangle\leq\lambda_{\max}(A)$, where
		$\lambda_{\max}(A)$ denotes the largest eigenvalue of $A$. Hence,
		\begin{equation}
			\Tr(\sigma A)
			\leq
			\lambda_{\max}(A),
		\end{equation}
		with equality if $\sigma=|\psi_{\max}\rangle\langle\psi_{\max}|$, where
		$|\psi_{\max}\rangle$ is any eigenvector corresponding to
		$\lambda_{\max}(A)$. Thus, $g$ attains its maximum at a pure state, and therefore $k^*=1$.
	\end{proof}
	
	 Condition (i) is strictly stronger than requiring $\Phi_\theta$ to be entanglement-breaking at every individual $\theta$ near $\theta_0$. Entanglement-breaking guarantees, for each fixed $\theta$, \emph{some} Holevo decomposition $\Phi_\theta(X) = \sum_i \operatorname{Tr}[M_i^\theta X]\xi_{i,\theta}$, but the POVM $\{M_i^\theta\}$ may itself vary with $\theta$ in an essential way. In that case, feeding a fixed bipartite state $\tilde\sigma$ through $\mathrm{id}\otimes\Phi_\theta$ produces a separable output at each $\theta$, but the natural convex decomposition of $\omega_\theta$ has weights $p_i(\theta)$ and marginals $\tau_i(\theta)$ that vary with $\theta$; differentiating in $\theta$ then produces extra terms (associated with distinguishing the $\theta$-dependent measurement outcomes) that are not controlled by the convexity bound for $J$, which requires $\theta$-independent weights. Condition (i) rules this out by insisting the \emph{same} measurement $\{M_i\}$ works for all $\theta$ near $\theta_0$, so that the entire $\theta$-dependence is pushed into the ``preparation'' states $\xi_{i,\theta}$ --- precisely the situation the proof requires. We discuss this distinction and its implications in more detail in \appref{ap:remarks}.

\section{A Conditional Bound on the Incremental Fisher Information Gain}
Although Theorem~\ref{thm:kstar} completely characterizes the optimal ancilla dimension, it does not quantify how rapidly the Fisher information approaches its optimal value as the ancilla dimension increases. In general, obtaining
such quantitative estimates appears difficult. Nevertheless, under an additional structural assumption on the sequence of optimal input states, one can derive a simple lower bound on the incremental gain $J_{k+1}-J_k$. The following result should therefore be viewed as a conditional estimate rather than a general characterization.

	\begin{theorem}[Ancilla extension gain bound]
		\label{thm:ancilla-gain}
		Let
		\begin{equation}
			h(k):=J_k(\Phi_\theta)
			=
			\max_{\sigma\in S_k(\mathcal H)}g(\sigma),
		\end{equation}
		where $g$ is the concave function appearing in the variational
		characterization of the channel Fisher information. Suppose that for every
		$k=1,\dots,d-1$, optimal states can be chosen  such that
		\begin{equation}
			\sigma_{k+1}^*
			=
			(1-t)\sigma_k^*
			+
			t \uptau,
		\end{equation}
		for some $t \in(0,1)$ and $\uptau \in S(\mathcal H)$.
		Then the gain obtained by increasing the ancilla dimension satisfies
		\begin{equation}
			J_{k+1}(\Phi_\theta)-J_k(\Phi_\theta)
			\geq
			t 
			\left(
			g(\uptau)-J_k(\Phi_\theta)
			\right).
		\end{equation}
	\end{theorem}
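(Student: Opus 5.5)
The plan is to combine the concavity of $g$ (Lemma~\ref{lem:concave}) with the variational characterization of Theorem~\ref{thm:Jk}. One normalization issue comes first. Theorem~\ref{thm:Jk} gives $J_k=4\max_{S_k}g$, whereas the statement defines $h(k)=J_k=\max_{S_k}g$. I will work with the normalization used in the statement, so that $J_k=\max_{\sigma\in S_k(\mathcal H)}g(\sigma)$, absorbing the factor $4$ into $g$. The inequality to be proved is homogeneous under rescaling $g$ and all $J_k$ by the same constant, so nothing changes.

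\textbf{Optimality of the chosen states.} By hypothesis, $\sigma_k^*$ is a maximizer of $g$ over $S_k(\mathcal H)$, so $g(\sigma_k^*)=J_k$. Likewise $\sigma_{k+1}^*\in S_{k+1}(\mathcal H)$ is a maximizer over $S_{k+1}(\mathcal H)$, so $g(\sigma_{k+1}^*)=J_{k+1}$. In fact only the weaker statement $J_{k+1}\ge g(\sigma_{k+1}^*)$ is needed, and this holds automatically because $\sigma_{k+1}^*\in S_{k+1}(\mathcal H)$.

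\textbf{Concavity along the segment.} Applying Lemma~\ref{lem:concave} to the convex combination $\sigma_{k+1}^*=(1-t)\sigma_k^*+t\uptau$ gives
\begin{equation*}
J_{k+1}\ \ge\ g(\sigma_{k+1}^*)\ \ge\ (1-t)\,g(\sigma_k^*)+t\,g(\uptau)\ =\ (1-t)J_k+t\,g(\uptau).
\end{equation*}
Subtracting $J_k$ from both sides yields $J_{k+1}-J_k\ge t\,(g(\uptau)-J_k)$, which is the claim.

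\textbf{Where care is needed.} The argument has no real obstacle. The only points to check are the following. First, $\uptau$ must lie in $S(\mathcal H)$, the domain on which $g$ is concave; this is part of the hypothesis, and no rank condition on $\uptau$ is required. Second, $\sigma_k^*$ must genuinely attain $J_k$. Such maximizers exist because $g$ is continuous and $S_k(\mathcal H)$ is compact, being a closed subset of $S(\mathcal H)$ (rank is lower semicontinuous). Finally, I will note that the bound is informative only when $g(\uptau)>J_k$; otherwise it is implied by the monotonicity in \eqref{eq:monotonicity}.
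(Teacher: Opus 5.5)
Your proof is correct and is the same as the paper's: apply the concavity of $g$ (Lemma~\ref{lem:concave}) along the segment $\sigma_{k+1}^*=(1-t)\sigma_k^*+t\uptau$, use $g(\sigma_k^*)=J_k$, and rearrange. Your remark that only $J_{k+1}\ge g(\sigma_{k+1}^*)$ is needed, and your treatment of the factor-of-$4$ normalization by homogeneity, are harmless refinements that the paper does not make explicit.
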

	
	\begin{proof}
		Using the assumed decomposition of the optimal rank-$(k+1)$ state and the
		concavity of $g$, we have
		\begin{equation}
			\begin{aligned}
				J_{k+1}(\Phi_\theta)
				&=
				g(\sigma_{k+1}^*)\\
				&=
				g\bigl(
				(1-t)\sigma_k^*
				+
				t \uptau_k
				\bigr)\\
				&\geq
				(1-t)g(\sigma_k^*)
				+
				t g(\uptau_k).
			\end{aligned}
		\end{equation}
		Since $\sigma_k^*$ is optimal at rank $k$, i.e., $g(\sigma_k^*)=J_k(\Phi_\theta)$, therefore we have
		\begin{equation}
			\begin{aligned}
				J_{k+1}(\Phi_\theta)
				&\geq
				(1-t)J_k(\Phi_\theta)
				+
				t g(\uptau),
			\end{aligned}
		\end{equation}
		which can be rearranged as
		\begin{equation}
			J_{k+1}(\Phi_\theta)-J_k(\Phi_\theta) \geq 	t \left( g(\uptau)-J_k(\Phi_\theta) 	\right).
		\end{equation}
		
		This proves the claim.
	\end{proof}

	Theorem~\ref{thm:ancilla-gain} gives an operational interpretation of the ancilla dimension as a metrological resource. It shows that increasing the ancilla is useful when the additional state directions accessible at dimension $k+1$ provide appreciable parameter sensitivity beyond that achievable at dimension $k$. Thus, the bound quantifies the potential statistical benefit of investing in a larger reference system such that a substantial gain in $J_k(\Phi_\theta)$ translates, through the quantum Cramér--Rao bound, into improved attainable precision in estimating $\theta$.


\section{Special cases}\label{sec:SpecialCases}
We now illustrate the preceding results on several representative channel families. These examples demonstrate how the variational characterization of $J_k$ and the rank characterization of Theorem~\ref{thm:kstar} can be used in
practice to determine the optimal ancilla dimension. They also highlight that different physical noise models exhibit qualitatively different behavior: while some channels require no ancillary entanglement, others require the full
ancilla dimension to attain the maximal SLD Fisher information.

\subsection{Unitary channels}
We first consider a unitary channel family,
\begin{equation}
	\Phi_\theta(\rho)=U_\theta\rho U_\theta^\dagger ,
\end{equation}
generated by the Hermitian operator
\begin{equation}
	H_\theta=\i \, \dot U_\theta U_\theta^\dagger ,
\end{equation}
so that $\dot U_\theta=-\i \, H_\theta U_\theta$. Since the channel admits a
single Kraus operator, $B_\theta=U_\theta$, the Gram matrix appearing in
the variational formulation reduces to a scalar,
\begin{equation}
	G(\sigma)=\Tr(\sigma U_\theta^\dagger U_\theta)=1 .
\end{equation}
The Lyapunov equation for the optimal generator correction therefore gives
$X_\ast=\i \, C(\sigma)/2$, and the minimized functional becomes
\begin{equation}
	g(\sigma) = \Tr[\sigma\dot U_\theta^\dagger\dot U_\theta] + \frac{1}{4}C(\sigma)^2. \label{eq:gUnitary}
\end{equation}
Introducing the unitarily equivalent generator  $\widetilde{H}_\theta = U_\theta^\dagger H_\theta U_\theta = \i \, U_\theta^\dagger\dot U_\theta$, and using  $\dot U_\theta^\dagger\dot U_\theta =U_\theta^\dagger H_\theta^2U_\theta$ and cyclicity of the trace, the first term becomes
\begin{equation}
	\Tr[\sigma\dot U_\theta^\dagger\dot U_\theta] = \Tr[\sigma\widetilde H_\theta^2]. \label{eq:FirstTermgUnitary}
\end{equation}
The velocity-overlap matrix $C(\sigma)$ becomes
\begin{equation}
	C(\sigma)
	=
	\Tr\!\left[
	\sigma
	(\dot U_\theta^\dagger U_\theta-U_\theta^\dagger\dot U_\theta)
	\right].
\end{equation}
This simplifies by using  $U_\theta^\dagger \dot{U}_\theta\!=\! -\i \, \widetilde{H}_\theta$, to  the following
\begin{equation}
	C(\sigma) \!=\!  2 \, \i  \Tr(\sigma\widetilde H_\theta). \label{eq:SecondTermgUnitary}
\end{equation} 
Substituting Eqs.~\eqref{eq:FirstTermgUnitary} and \eqref{eq:SecondTermgUnitary} into  \eqnref{eq:gUnitary}, we obtain
\begin{equation}
	g(\sigma)
	=
	\Tr[\sigma\widetilde H_\theta^2]
	-
	\left(
	\Tr[\sigma\widetilde H_\theta]
	\right)^2
	=
	\operatorname{Var}_{\sigma}(\widetilde H_\theta).
\end{equation}
Therefore, maximizing the channel functional reduces to maximizing the variance of a Hermitian operator. The maximum variance of a Hermitian operator is determined by the extremal eigenvalues and is attained by a pure state supported on the corresponding eigenspaces. In particular, if $\ket{h_{\max}}$ and $\ket{h_{\min}}$ denote eigenvectors associated with
the largest and smallest eigenvalues of $\widetilde H_\theta$, an optimal probe state is
\begin{equation}
	\ket{\psi_{\rm opt}} = \frac{ \ket{h_{\max}} + e^{\i\, \varphi}\ket{h_{\min}} }{\sqrt{2}},
\end{equation}
where $\varphi$ is an arbitrary phase. Hence,
\begin{equation}
	\max_{\sigma\in S(\mathcal H)}g(\sigma)
	=
	\frac{1}{4}
	\left(
	\lambda_{\max}(\widetilde H_\theta)
	-
	\lambda_{\min}(\widetilde H_\theta)
	\right)^2 .
\end{equation}
Since $\widetilde H_\theta$ and $H_\theta$ are related by a unitary, they have identical spectra, and therefore
\begin{equation}
	\max_{\sigma\in S(\mathcal H)}g(\sigma)
	=
	\frac{1}{4}
	\left(
	\lambda_{\max}(H_\theta)
	-
	\lambda_{\min}(H_\theta)
	\right)^2 .
\end{equation}
Since the maximizer of $g$ is a pure state, Theorem~\ref{thm:kstar}
immediately yields $k^*=1$. Equivalently, unitary parameter estimation does not require an ancillary
system to achieve the optimal channel precision.

\subsection{Depolarizing qubit channel}
Consider the qubit depolarizing channel
\begin{equation}
	\Phi_\theta(\rho) = (1-\theta)\rho + \frac{\theta}{2} \mathbbm{1},
\end{equation}
with parameter $\theta\in(-1/3,1)$.
Writing
\begin{equation}
	\rho
	=
	\frac12
	\left(
	\mathbbm{1}
	+a\sigma_1
	+b\sigma_2
	+c\sigma_3
	\right),
\end{equation}
the variational function $g(\rho)$ turns out to be
\begin{equation}
	g(\rho)
	=
	\frac{
		3(1+\theta)-2(a^2+b^2+c^2)
	}
	{
		2(1-\theta)(1+\theta)(1+3\theta)
	}.
\end{equation}
We now determine $k^*$ by applying the rank characterization of Theorem~\ref{thm:kstar} directly to the maximization of $g$. Since
\begin{equation}
	a^2+b^2+c^2
	=
	2\operatorname{Tr}(\rho^2)-1,
\end{equation}
with arbitrary $a, b$ subject to $a^2 + b^2 + c^2 \le 1$, the function $g$ depends only on the purity of $\rho$ and is strictly decreasing as $\operatorname{Tr}(\rho^2)$ increases. Therefore $g$ is
uniquely maximized by the maximally mixed state,
\begin{equation}
	\rho^*
	=
	\frac{\mathbbm{1}}{2},
\end{equation}
which is the unique state of minimum purity. The maximizer is therefore unique and has full rank. By
Theorem~\ref{thm:kstar},
\begin{equation}
	k^*
	=
	\operatorname{rank}(\rho^*)
	=
	2.
\end{equation}
Thus the full ancilla dimension is required to attain the maximal channel
Fisher information for the qubit depolarizing channel.

\subsection{Amplitude damping channel}
To demonstrate a non-trivial noisy channel for which ancillary entanglement provides no advantage ($k^* \!=\! 1 \!<\! d \!=\! 2$), consider the single-qubit amplitude damping channel $\mathcal{E}_\theta$, which models spontaneous emission into a thermal reservoir at zero temperature. The channel map is given by $\mathcal{E}_\theta(\rho) \!=\! A_0 \rho A_0^\dagger + A_1 \rho A_1^\dagger$ with parameter $\theta \in (0,1)$, where the Kraus operators in the computational basis are
\begin{equation}
	A_0 = \begin{pmatrix} 1 & 0 \\ 0 & \sqrt{1-\theta} \end{pmatrix}, \quad 
	A_1 = \begin{pmatrix} 0 & \sqrt{\theta} \\ 0 & 0 \end{pmatrix}.
\end{equation}
For this channel, the variational function takes the form
\begin{equation}
	g(\rho)
	=
	-\frac{a^2+b^2}{16\,\theta}
	+
	\frac{1-c}{8\,\theta(1-\theta)}.
\end{equation}
For fixed $c$, $g$ is maximized by setting $a=b=0$. The remaining expression is monotonically decreasing in $c$, and hence its maximum over the Bloch ball is attained at $c=-1$. Thus the unique maximizer is the pure state $\rho^*=|1\rangle\langle1|$, with
\begin{equation}
	\operatorname{rank}(\rho^*)=1.
\end{equation}
The rank characterization of Theorem~\ref{thm:kstar} therefore gives
\begin{equation}
	k^*
	=
	\operatorname{rank}(\rho^*)
	=
	1
	<
	d=2.
\end{equation}
Hence, in contrast to the depolarizing channel, the amplitude damping channel is optimally probed without an ancilla. In other words, ancillary entanglement provides no advantage for estimating the damping parameter in this example.

Together, these examples illustrate the full range of behaviors predicted by the general theory, from channels for which ancillary entanglement is unnecessary to channels for which the maximal ancilla dimension is required.

\section{Conclusions}
We have investigated the role of ancillary entanglement as a resource for quantum channel estimation by asking a basic quantitative question, how large must the ancilla be in order to attain the maximal SLD Fisher information? To address this question, we introduced the notion of the \emph{optimal entanglement dimension} $k^*$, defined as the smallest ancilla dimension required to achieve the Fisher information of the fully extended channel. 

A main result of this work is the variational characterization of the $k$-ancilla Fisher 
information, $J_k\!=\! \max_{\sigma\in S_k(\mathcal H)}g(\sigma)$,  which reduces the optimization over bipartite probe states to a
rank-constrained optimization over density operators on the system Hilbert space. This reduction leads directly to a complete characterization of the optimal entanglement dimension given by $k^*\!=\! \rank(\sigma^*)$, 
where $\sigma^*$ is a minimum-rank maximizer of the concave function $g$. Thus, the amount of ancillary entanglement required for optimal channel estimation is determined entirely by the rank structure of an optimal input state.

We also identified situations in which ancillary entanglement provides no advantage. In particular, we proved that $k^*=1$ for channel families admitting a fixed measure-and-prepare representation and for horizontal
generator curves. In addition, under an extra structural assumption relating optimal states for successive ancilla dimensions, we obtained a lower bound on the incremental gain $J_{k+1}-J_k$, providing a quantitative estimate of the improvement obtained by enlarging the ancilla. Finally, we illustrated the general theory through several representative examples, recovering the expected behavior for unitary channels ($k^*\!=\! 1$), showing that the qubit depolarizing channel requires the full ancilla dimension ($k^*\!=\! 2$), and demonstrating that
the amplitude damping channel again satisfies $k^*\!=\! 1$.

Several open questions remain. A complete characterization of channel families for which $k^*\!=\! 1$ beyond the sufficient conditions established here would further clarify the role of ancillary entanglement in quantum estimation. It would also be interesting to extend the variational characterization to the multiparameter setting, where the Fisher information becomes matrix-valued and the optimal ancilla dimension may depend on the estimation direction. Another natural direction is the study of repeated channel use, including adaptive estimation strategies, where one expects an analogous characterization in terms  of rank-constrained optimizations over suitable multichannel probe states.

More broadly, the framework developed here recasts the problem of optimizing ancillary resources in channel estimation as a problem in finite-dimensional convex optimization. We hope that this perspective will prove useful both for the theoretical study of quantum metrology and for the design of practical channel-estimation protocols under realistic resource constraints.

	\begin{acknowledgments}
	The author is grateful to Akio Fujiwara for his valuable comments. This work was supported by the Czech Science Foundation under Project No.~25--15775S and by the Marie Skłodowska-Curie Actions--COFUND project, co-funded by the European Union (Physics for Future, Grant Agreement No.~101081515).
	\end{acknowledgments}

%

\appendix

\section{Further remarks on Theorem~\ref{thm:zero_gap}(i)}\label{ap:remarks}

The proof of Theorem~\ref{thm:zero_gap}(i) relies on the assertion that, for every input state
$\tilde{\sigma}\in S(\mathcal{H}\otimes\mathcal{H})$, the output state
\begin{equation}
	\omega_\theta
	:=
	(\operatorname{id}\otimes\Phi_\theta)(\tilde{\sigma})
\end{equation}
admits a separable decomposition of the form
\begin{equation}
	\omega_\theta
	=
	\sum_i p_i,\tau_i\otimes\xi_{i,\theta},
	\label{eq:fixed_sep}
\end{equation}
where the probabilities $p_i$ and the states $\tau_i$ are independent of the parameter $\theta$. This property can then be combined with the convexity of the SLD Fisher information to obtain
\begin{equation}
	J(\omega_\theta)
	\le
	\sum_i p_i J(\tau_i\otimes\xi_{i,\theta})
	=
	\sum_i p_i J(\xi_{i,\theta}),
\end{equation}
where the last equality follows from the fact that the first tensor factor is independent of $\theta$.

It is important to distinguish this property from the assumption that each channel $\Phi_\theta$ is entanglement-breaking. The entanglement-breaking property guarantees only that, for each fixed value of $\theta$, the output state $\omega_\theta$ is separable. In general, this implies merely that one can write
\begin{equation}
	\omega_\theta
	=
	\sum_i p_i(\theta),
	\tau_{i,\theta}\otimes\xi_{i,\theta},
	\label{eq:general_sep}
\end{equation}
where the probabilities and the local states may themselves depend on the parameter. There is, in general, no reason to expect a separable decomposition whose probabilities and first-system states are independent of $\theta$.

This distinction is important because the usual convexity inequality for the SLD Fisher information applies directly to convex combinations with parameter-independent weights. When the probabilities depend on the parameter, one must instead use the extended convexity inequality~\cite{Alipour2015}
\begin{equation}
	J\left(
	\sum_i p_i(\theta)\rho_{i,\theta}
	\right)
	\leq
	\sum_i p_i(\theta)J(\rho_{i,\theta})
	+
	J_{\mathrm{cl}} \left({p_i(\theta)}\right),
	\label{eq:extended_convexity}
\end{equation}
where $J_{\mathrm{cl}}$ denotes the classical Fisher information of the probability distribution ${p_i(\theta)}$. Applying \eqref{eq:extended_convexity} to \eqref{eq:general_sep} gives
\begin{equation}
	J(\omega_\theta)
	\leq
	\sum_i p_i(\theta)
	J(\tau_{i,\theta}\otimes\xi_{i,\theta})
	+
	J_{\mathrm{cl}}\!\left({p_i(\theta)}\right).
\end{equation}
Using additivity of the SLD Fisher information under tensor products,
\begin{equation}
	J(\tau_{i,\theta}\otimes\xi_{i,\theta})
	=
	J(\tau_{i,\theta})
	+
	J(\xi_{i,\theta}),
\end{equation}
we obtain
\begin{equation}
	J(\omega_\theta)
	\leq
	\sum_i p_i(\theta)
	\Bigl(
	J(\tau_{i,\theta})
	+
	J(\xi_{i,\theta})
	\Bigr)
	+
	J_{\mathrm{cl}}\left({p_i(\theta)}\right).
\end{equation}
Thus, without further structure, the bound contains additional contributions arising from the parameter dependence of the mixing probabilities and of the first subsystem. These contributions are absent when the probabilities $p_i$ and the states $\tau_i$ can be chosen independently of $\theta$.

A sufficient condition ensuring such a decomposition is that the family of channels admitting a common measure-and-prepare representation,
\begin{equation}
	\Phi_\theta(\rho)
	=
	\sum_i
	\operatorname{Tr}(M_i\rho),
	\sigma_{i,\theta},
\end{equation}
where the POVM ${M_i}$ is independent of $\theta$. In this case,
\begin{equation}
	(\operatorname{id}\otimes\Phi_\theta)(\tilde{\sigma})
	=
	\sum_i
	p_i,\tau_i\otimes\sigma_{i,\theta},
\end{equation}
with
\begin{equation}
	p_i = 	\Tr \left[ 	(\bfone \otimes M_i)\tilde{\sigma} \right], \qquad \tau_i = \frac{ \Tr_2 \left[ (\bfone \otimes M_i)\tilde{\sigma}
		\right] }{p_i},
\end{equation}
both independent of $\theta$ (with terms satisfying $p_i=0$ omitted). Hence, under the common fixed-measurement assumption, the decomposition \eqref{eq:fixed_sep} is indeed available, and the convexity argument used in the proof of Theorem~\ref{thm:zero_gap}(i) is justified.

It should therefore be emphasized that the fixed-measurement assumption is stronger than the statement that each $\Phi_\theta$ is entanglement-breaking. Whether the zero-gap conclusion remains valid for arbitrary parameter-dependent families of entanglement-breaking channels requires a separate argument and does not follow from the above convexity argument alone.
\end{document}